\newcommand{\tr}{\operatorname{tr}}
\newcommand{\E}{\mathsf{E}}
\newcommand{\VAR}{\mathsf{VAR}}
\newcommand{\COV}{\mathsf{COV}}
\newcommand{\Prob}{\mathsf{P}}
\newtheorem{thm}{Theorem}
\title{Measures of Variability for Bayesian Network Graphical Structures}
\author{Marco Scutari}
\address{Department of Statistical Sciences, University of Padova}
\begin{document}

\begin{abstract}
The structure of a Bayesian network includes a great deal of information about
the probability distribution of the data, which is uniquely identified given 
some general distributional assumptions. Therefore it's important to study its
variability, which can be used to compare the performance of different learning
algorithms and to measure the strength of any arbitrary subset of arcs.

In this paper we will introduce some descriptive statistics and the
corresponding parametric and Monte Carlo tests on the undirected graph
underlying the structure of a Bayesian network, modeled as a multivariate 
Bernoulli random variable. A simple numeric example and the comparison
of the performance of some structure learning algorithm on small samples
will then illustrate their use.
\end{abstract}

\begin{keyword}
Bayesian network \sep bootstrap \sep multivariate Bernoulli distribution 
  \sep structure learning algorithm.
\end{keyword}

\maketitle

\section{Introduction}

In recent years Bayesian networks have been successfully applied in several
different disciplines, including medicine, biology and epidemiology (see for
example \citet{gene} and \citet{holmes}). This has been made possible by the
rapid evolution of structure learning algorithms, from constraint-based ones
(such as PC \citep{spirtes}, Grow-Shrink \citep{mphd}, IAMB \citep{iamb} and
its variants \citep{fastiamb}) to score-based (such as TABU search \citep{russel},
Greedy Equivalent Search \citep{ges} and genetic algorithms \citep{larranaga}) 
and hybrid ones (such as Max-Min Hill Climbing \citep{mmhc}).

The main goal in the development of these algorithms has been the reduction of
the number of either independence tests or score comparisons needed to learn
the structure of the Bayesian network. Their correctness has been proved assuming
either very large sample sizes in relation to the number of variables (in the
case of Greed Equivalent Search) or the absence of both false positives and
false negatives (in the case of Grow-Shrink and IAMB). In most cases the
characteristics of the learned networks were studied using a small number of
reference data sets \citep{bnr} as benchmarks, and differences from the true
structure measured with purely descriptive measures such as Hamming distance
\citep{graphs}.

This approach to model evaluation is not possible for real world data sets, as
the true structure of their probability distribution is not known.
An alternative is provided by the use of either parametric or nonparametric
bootstrap \citep{efron}. By applying a learning algorithm to a sufficiently
large number of bootstrap samples it is possible to obtain the empirical
probability of any feature of the resulting network \citep{friedman}, such as
the structure of the Markov Blanket of a particular node. The fundamental
limit in the interpretation of the results is that the ``reasonable'' level
of confidence for thresholding depends on the data.

In this paper we propose a modified bootstrap-based approach for the inference
on the structure of a Bayesian network. The undirected graph underlying the
network structure is modeled as a multivariate Bernoulli random variable in
which each component is associated with an arc. This assumption allows the
derivation of both exact and asymptotic measures of the variability of the
network structure or any of its parts.

\section{Bayesian networks and bootstrap}
\label{sec:bn}

Bayesian networks are graphical models where nodes represent random variables
(the two terms are used interchangeably in this article) and arcs represent
probabilistic dependencies between them \citep{korb}.

The graphical structure $\mathcal{G} = (\mathbf{V}, A)$ of a Bayesian network is
a \textit{directed acyclic graph} (DAG) which defines a factorization of the
joint probability distribution of $\mathbf{V} = \{X_1, X_2, \ldots, X_v\}$,
often called the \textit{global probability distribution}, into a set of
\textit{local probability distributions}, one for each variable. The form of 
the factorization is given by the \textit{Markov property} of Bayesian networks,
which states that every random variable $X_i$ directly depends only on its
parents $\Pi_{X_i}$.

Therefore it is important to define confidence and variability measures for 
specific features in the network structure, such as the presence of specific 
configurations of arcs. In particular a measure of variability for the network
structure as a whole has many applications both as an indicator of goodness of
fit for a particular Bayesian network and as a criterion to evaluate the 
performance of a learning algorithm.

Confidence measures have been developed by \citet{friedman} using bootstrap 
simulation, and later modified by \citet{imoto} to estimate the marginal confidence
in the presence of an arc (called \textit{edge intensity}, and also known as
\textit{arc strength}) and its direction. This approach can be summarized as 
follows:
\begin{enumerate}
  \item For $b = 1, 2, \ldots, m$
    \begin{enumerate}
      \item re-sample a new data set $\mathbf{D^*_b}$ from the original data
        $\mathbf{D}$ using either parametric or nonparametric bootstrap.
      \item learn a Bayesian network $\mathcal{G}_b$ from $\mathbf{D^*_b}$.
    \end{enumerate}
  \item Estimate the confidence in each feature $f$ of interest as
    $\hat\Prob(f) = (1/m) \sum_{b=1}^m f(\mathcal{G}_b)$.
\end{enumerate}

However, the empirical probabilities $\hat\Prob(f)$ are difficult to evaluate,
because the distribution of $\mathcal{G}$ in the space of DAGs is unknown and 
because the confidence threshold value depends on the data.

\section{The multivariate Bernoulli distribution}
\label{sec:mvb}

Let $B_1, B_2, \ldots, B_k$, $k \in \mathbb{N}$ be Bernoulli random variables
with marginal probability of success $p_1, p_2, \ldots, p_k$, that is
$B_i \sim Ber(p_i)$, $i = 1, \ldots, k$. Then the distribution of the random
vector $\mathbf{B} = [B_1, B_2, \ldots, B_k]^T$ over the joint probability
space of $B_1, B_2, \ldots, B_k$ is a \textit{multivariate Bernoulli random
variable} \citep{krummenauer}, denoted as $Ber_k(\mathbf{p})$. Its probability
function is uniquely identified by the parameter collection $\mathbf{p} = 
\left\{ p_I : I \subseteq \{1, \ldots, k \}, I \neq \varnothing \right\}$, which
represents the \textit{dependence structure} among the marginal distributions in
terms of simultaneous successes for every non-empty subset $I$ of elements of
$\mathbf{B}$.

However, several useful results depend only on the first and second order moments
of $\mathbf{B}$
\begin{align}
  \E(B_i) &= p_i \\
  \label{eq:var} \VAR(B_i) &= \E(B_i^2) - \E(B_i)^2 = p_i - p_i^2 \\
  \label{eq:cov} \COV(B_i, B_j) &= \E(B_i B_j) - \E(B_i) \E(B_j)= p_{ij} - p_i p_j
\end{align}
and the reduced parameter collection 
$\mathbf{\tilde{p}} = \left\{ p_{ij} : i,j = 1, \ldots, k \right\}$,
which can be used as an approximation of $\mathbf{p}$ in the generation random 
multivariate Bernoulli vectors in \citet{mvbsim}.

\subsection{Uncorrelation and independence}

We will first consider a simple result that links covariance and independence of
two univariate Bernoulli variables.
\begin{thm}
\label{thm:univindep}
  Let $B_i$ and $B_j$ be two Bernoulli random variables. Then $B_i$ and $B_j$
  are independent if and only if they are uncorrelated.
\end{thm}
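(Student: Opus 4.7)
The plan is to handle the two directions separately. The forward implication (independence implies uncorrelatedness) is standard and holds for any pair of integrable random variables: if $B_i \indep B_j$, then $\E(B_i B_j) = \E(B_i)\E(B_j)$, so by equation~\eqref{eq:cov} one has $\COV(B_i, B_j) = p_{ij} - p_i p_j = 0$. I would dispose of this direction in a single line.

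The real content is the converse, which relies on the fact that each $B_i$ takes only two values. The key observation is that the joint distribution of $(B_i, B_j)$ is supported on the four points $\{0,1\}^2$ and hence is determined by the three parameters $p_i$, $p_j$, and $p_{ij} = \Prob(B_i = 1, B_j = 1)$, through
\begin{align*}
  \Prob(B_i = 1, B_j = 1) &= p_{ij}, \\
  \Prob(B_i = 1, B_j = 0) &= p_i - p_{ij}, \\
  \Prob(B_i = 0, B_j = 1) &= p_j - p_{ij}, \\
  \Prob(B_i = 0, B_j = 0) &= 1 - p_i - p_j + p_{ij}.
\end{align*}
Assuming $\COV(B_i, B_j) = 0$, equation~\eqref{eq:cov} yields $p_{ij} = p_i p_j$. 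Substituting this back into the four expressions above and factoring gives $p_i p_j$, $p_i(1 - p_j)$, $(1 - p_i)p_j$, and $(1 - p_i)(1 - p_j)$, respectively, which is exactly the product of the marginal probabilities in each of the four cases. Independence then follows.

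I do not anticipate any real obstacle: everything reduces to checking the factorization on a four-point support. The only thing to be careful about is that $\COV(B_i, B_j) = 0$ is used in the single identity $p_{ij} = p_i p_j$ and that this identity alone is enough, via the $\{0,1\}$-valued nature of the variables, to force factorization on all four atoms (the three remaining identities follow automatically from the marginal constraints). This two-value support is precisely why the implication fails for generic random variables but holds here.
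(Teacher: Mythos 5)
Your proof is correct and follows the same route as the paper: the forward direction is the standard one-line argument, and the converse reduces $\COV(B_i,B_j)=0$ to the single identity $p_{ij}=p_ip_j$ via equation~\eqref{eq:cov}. The only difference is that you explicitly verify the factorization on all four atoms of $\{0,1\}^2$, a step the paper's proof leaves implicit when it declares $p_{ij}=p_ip_j$ to complete the proof, so your write-up is if anything more complete than the original.
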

\begin{proof}
  If $B_i$ and $B_j$ are independent, then by definition $\COV(B_i, B_j) = 0$.
  If on the other hand we have that $\COV(B_i, B_j) = 0$, then $p_{ij} = p_i p_j$
  which completes the proof.
\end{proof}

This theorem can be extended to multivariate Bernoulli random variables as follows.

\begin{thm}
  Let $\mathbf{B} = [B_1, B_2, \ldots, B_k]^T$ and $\mathbf{C} = [C_1, C_2, \ldots, C_l]^T$,
  $k,l \in \mathbb{N}$ be two multivariate Bernoulli random variables. Then $\mathbf{B}$ and
  $\mathbf{C}$ are independent if and only if $\COV(\mathbf{B}, \mathbf{C}) = \mathbf{O}$,
  where $\mathbf{O}$ is the zero matrix.
\end{thm}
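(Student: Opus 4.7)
The ``only if'' direction is immediate: if $\mathbf{B} \indep \mathbf{C}$, then $\E(B_i C_j) = \E(B_i)\E(C_j)$ for every pair of components, so each entry of the cross-covariance matrix vanishes. I would dispose of this direction in one line.

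\medskip

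\noindent For the ``if'' direction my plan is to reduce to Theorem~\ref{thm:univindep} componentwise and then lift. Specifically, $\COV(\mathbf{B}, \mathbf{C}) = \mathbf{O}$ means $\COV(B_i, C_j) = 0$ for every $i \in \{1,\ldots,k\}$ and $j \in \{1,\ldots,l\}$; applying Theorem~\ref{thm:univindep} pairwise yields $B_i \indep C_j$ for every such pair, which is equivalent to
\begin{equation*}
  p_{ij} = \Prob(B_i = 1, C_j = 1) = p_i\, q_j,
\end{equation*}
where $q_j = \Prob(C_j = 1)$. What is then left is to show that this componentwise factorization forces the full joint law on $\{0,1\}^k \times \{0,1\}^l$ to factor as well.

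\medskip

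\noindent Because $\mathbf{B}$ and $\mathbf{C}$ take values in finite sets, their joint distribution is determined by the multivariate Bernoulli parameters $\Prob(B_i = 1, i \in I;\, C_j = 1, j \in J)$ of Section~\ref{sec:mvb}. I would aim to establish, by induction on $|I| + |J|$, that each of these joint success probabilities factors as $\Prob(B_i = 1, i \in I)\,\Prob(C_j = 1, j \in J)$, using inclusion-exclusion to express higher-order intersections in terms of marginals already handled. Once every such parameter factors, the full joint pmf factors and independence of $\mathbf{B}$ and $\mathbf{C}$ follows.

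\medskip

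\noindent The main obstacle is precisely this lifting step. Pairwise uncorrelation of the scalar components is in general strictly weaker than joint independence of the vectors (the classical XOR construction $C = B_1 \oplus B_2$ with $B_1, B_2$ i.i.d.\ $Ber(1/2)$ has $\COV(B_i, C) = 0$ yet $(B_1, B_2) \not\indep C$). So either the hypothesis $\COV(\mathbf{B}, \mathbf{C}) = \mathbf{O}$ has to be read in the stronger sense that every product indicator $\prod_{i \in I} B_i$ is uncorrelated with every product indicator $\prod_{j \in J} C_j$ --- in which case Theorem~\ref{thm:univindep} can be applied to each such pair and the factorization is immediate --- or an additional structural assumption is needed. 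I would therefore adopt this broader reading of the cross-covariance, make it explicit in the proof, and reduce the multivariate statement to repeated applications of Theorem~\ref{thm:univindep} to the Bernoulli variables $\prod_{i\in I} B_i$ and $\prod_{j\in J} C_j$.
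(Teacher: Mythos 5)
Your diagnosis of the obstacle is exactly right, and you should know that the paper's own proof does not overcome it: after deducing from $c_{ij}=0$ that each pair $(B_i,C_j)$ is independent, the paper asserts that the sigma-algebras $\sigma(\mathbf{B})$ and $\sigma(\mathbf{C})$, being generated by the two sets of components, must themselves be independent. That is precisely the pairwise-to-joint lift you flag, and it is not valid in general. Your XOR example settles the matter: with $B_1,B_2$ i.i.d.\ $Ber(1/2)$ and $\mathbf{C}=[C_1]^T$, $C_1=B_1\oplus B_2$, one has $\COV(\mathbf{B},\mathbf{C})=\mathbf{O}$, yet $\Prob(B_1=1,B_2=1,C_1=1)=1/4\neq\Prob(B_1=1,B_2=1)\Prob(C_1=1)=1/8$. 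So under the literal reading of the hypothesis (the $k\times l$ matrix of pairwise covariances vanishes) the theorem is false, and there is no missing trick for you to find: the gap is in the statement and proof as given, not in your attempt.

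Your proposed repair is the sound way to obtain a true statement. If the hypothesis is strengthened to require that every product indicator $\prod_{i\in I}B_i$ be uncorrelated with every $\prod_{j\in J}C_j$ (these products are themselves Bernoulli, so Theorem~\ref{thm:univindep} applies to each pair and gives $\E\bigl[\prod_{i\in I}B_i\prod_{j\in J}C_j\bigr]=\E\bigl[\prod_{i\in I}B_i\bigr]\E\bigl[\prod_{j\in J}C_j\bigr]$), then all joint success probabilities factor across the two blocks; since the law of a $\{0,1\}$-valued vector is determined by its joint success probabilities via M\"obius inversion (inclusion--exclusion), the full joint pmf factors and independence of $\mathbf{B}$ and $\mathbf{C}$ follows --- your induction on $|I|+|J|$ can be replaced by this one-step determination argument. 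Alternatively, one can keep the weak hypothesis and claim only pairwise independence of the components, which is essentially what the paper uses downstream: in the maximum-entropy null distribution of the Monte Carlo section the joint law is uniform over all graphs, so full independence of the edge indicators there holds for direct reasons and does not need this theorem in its stated generality.
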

\begin{proof}
  If $\mathbf{B}$ is independent from $\mathbf{C}$, then by definition every pair
  $(B_i, C_j)$, $i = 1, \ldots, k$, $j = 1, \ldots, l$ is independent. Therefore
  $\COV(\mathbf{B}, \mathbf{C}) = [c_{ij}] = \mathbf{O}$. 

  If conversely $\COV(\mathbf{B}, \mathbf{C}) = \mathbf{O}$, every pair $(B_i, C_j)$ 
  is independent as $c_{ij} = 0$ implies $p_{ij} = p_i p_j$. This in turn implies
  the independence of the random vectors $\mathbf{B}$ and $\mathbf{C}$, as their
  sigma-algebras $\sigma(\mathbf{B}) = \sigma(B_1) \times \ldots \times \sigma(B_k)$
  and $\sigma(\mathbf{C}) = \sigma(C_1) \times \ldots \times \sigma(C_l)$
  are functions of the sigma-algebras induced by the two sets of independent random
  variables $B_1, B_2, \ldots, B_k$ and $C_1, C_2, \ldots, C_l$.
\end{proof}

The correspondence between uncorrelation and independence is identical to the
analogous property of the multivariate Gaussian distribution \citep{ash}, and
is closely related to the strong normality defined for orthogonal second order
random variables in \citet{loeve}. It can also be applied to disjoint subsets
of components of a single multivariate Bernoulli variable, which are also
distributed as multivariate Bernoulli random variables.

\begin{thm}
  Let $\mathbf{B} = [B_1, B_2, \ldots, B_k]^T$ be a multivariate Bernoulli random
  variable; then every random vector $\mathbf{B^*} = [B_{i_1}, B_{i_2}, \ldots,
  B_{i_l}]^T$, $\left\{i_1, i_2, \ldots, i_l\right\} \subseteq \left\{1, 2, \ldots,
  k\right\}$ is a multivariate Bernoulli random variable.
\end{thm}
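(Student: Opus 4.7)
The plan is direct: by the definition of a multivariate Bernoulli random variable given in Section~\ref{sec:mvb}, it suffices to exhibit a parameter collection $\mathbf{p}^* = \{p^*_J : J \subseteq \{1, \ldots, l\}, J \neq \varnothing\}$ that uniquely identifies the distribution of $\mathbf{B^*}$ on its joint probability space, and to verify that $\mathbf{B^*}$ is supported on $\{0,1\}^l$ with Bernoulli marginals.

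First I would observe that each component $B_{i_j}$ is itself a univariate Bernoulli random variable with marginal success probability $p_{i_j}$ inherited from $\mathbf{p}$, so $\mathbf{B^*}$ takes values in $\{0,1\}^l$ and is defined on the same underlying probability space as $\mathbf{B}$. Its law is obtained by summing the joint probability function of $\mathbf{B}$ over the $2^{k-l}$ configurations of the coordinates indexed by $\{1, \ldots, k\} \setminus \{i_1, \ldots, i_l\}$. I would then inherit the candidate parameter collection from $\mathbf{p}$ by restriction, setting $p^*_J = p_{\{i_j : j \in J\}}$ for every non-empty $J \subseteq \{1, \ldots, l\}$. This identity holds because the probability of simultaneous successes on a subset of coordinates is invariant under marginalization of the remaining coordinates: writing $p^*_J = \E\bigl[\prod_{j \in J} B_{i_j}\bigr]$, the quantity is computed from the indicator of a cylinder set and so can be evaluated from either the full or the marginal law.

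The remaining step, which I expect to be the only one requiring care, is verifying that $\mathbf{p}^*$ indeed determines the marginal law rather than merely recording some of its moments. This follows from the standard inclusion-exclusion representation of each atomic probability $\Prob(\mathbf{B^*} = \mathbf{b})$, $\mathbf{b} \in \{0,1\}^l$, as an alternating sum over supersets of $\{j : b_j = 1\}$ of the joint-success probabilities $p^*_J$, which is precisely the identity that makes $\mathbf{p}$ a valid parametrization of $\mathbf{B}$ in Section~\ref{sec:mvb}. Once this Möbius-style inversion is spelled out, $\mathbf{B^*} \sim Ber_l(\mathbf{p}^*)$ and the theorem follows. The main obstacle is therefore bookkeeping on the indices rather than any genuine probabilistic content, and no subtlety arises in handling the marginalization sum.
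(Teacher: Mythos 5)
Your proposal is correct and follows essentially the same route as the paper's own (much terser) proof: note that the components $B_{i_j}$ remain Bernoulli and that the restricted parameter collection $\{p_{I^*} : I^* \subseteq \{i_1,\ldots,i_l\},\ I^* \neq \varnothing\}$, inherited from $\mathbf{p}$, uniquely identifies the law of $\mathbf{B^*}$. The only difference is that you spell out the marginalization and the inclusion--exclusion (M\"obius) inversion that the paper leaves implicit, which is a harmless elaboration rather than a different argument.
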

\begin{proof}
  The marginal components of $\mathbf{B^*}$ are Bernoulli random variables, because
  $\mathbf{B}$ is multivariate Bernoulli. The new dependency structure is defined as
  $\mathbf{p^*} = \left\{ p_{I^*} : I^* \subseteq \{i_1, \ldots, i_l \}, I^* \neq \varnothing \right\}$,
  and uniquely identifies the probability distribution of $\mathbf{B^*}$.
\end{proof}

\subsection{Properties of the covariance matrix}

The covariance matrix $\Sigma = [\sigma_{ij}]$, $i,j = 1, \ldots, k$ associated
with a multivariate Bernoulli random vector has several interesting numerical
properties. Due to the form of the central second order moments defined in
formulas \ref{eq:var} and \ref{eq:cov}, the diagonal elements $\sigma_{ii}$ are
bounded in the interval $[0, 1/4]$. The maximum is attained for $p_i = 1/2$, and
the minimum for both $p_i = 0$ and $p_i = 1$. For the Cauchy-Schwarz theorem 
\citep{ash} then $|\sigma_{ij}| \in \left[0, 1/4\right]$.

The eigenvalues $\lambda_1, \lambda_2, \ldots, \lambda_k$ of $\Sigma$ are
similarly bounded, as shown in the following theorem.

\begin{thm}
  Let $\mathbf{B} = [B_1, B_2, \ldots, B_k]^T$ be a multivariate Bernoulli random
  variable, and let $\Sigma = [\sigma_{ij}]$, $i,j = 1, \ldots, k$ be its covariance
  matrix. Let $\lambda_i$, $i = 1, \ldots, k$ be the eigenvalues of $\Sigma$. Then
  $0 \leqslant \sum_{i=1}^k \lambda_i \leqslant k/4$ and
  $0 \leqslant \lambda_i \leqslant k/4$.
\end{thm}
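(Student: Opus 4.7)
The plan is to exploit the two standard facts about eigenvalues of a symmetric matrix: their sum equals the trace, and for a positive semi-definite matrix they are all non-negative. Combining these with the bound $\sigma_{ii} \in [0, 1/4]$ noted just before the theorem will give everything at once.

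First I would observe that $\Sigma$ is a covariance matrix, hence symmetric and positive semi-definite. Positive semi-definiteness immediately yields $\lambda_i \geqslant 0$ for each $i$, taking care of the lower bounds in both claims.

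Next I would use the identity $\sum_{i=1}^k \lambda_i = \tr(\Sigma) = \sum_{i=1}^k \sigma_{ii}$. Since each diagonal entry satisfies $\sigma_{ii} = p_i - p_i^2 \in [0, 1/4]$ (as recalled in equation \ref{eq:var} and the preceding discussion), summing over $i$ gives $\sum_{i=1}^k \lambda_i \leqslant k/4$, which is the upper bound on the sum.

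Finally, for the per-eigenvalue bound, I would note that because every $\lambda_j$ is non-negative, any single eigenvalue $\lambda_i$ is bounded above by the full sum: $\lambda_i \leqslant \sum_{j=1}^k \lambda_j \leqslant k/4$. There is no real obstacle here; the statement is essentially a direct consequence of positive semi-definiteness together with the elementary diagonal bound, so the only care needed is to cite positive semi-definiteness explicitly rather than taking it for granted.
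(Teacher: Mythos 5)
Your proposal is correct and follows essentially the same route as the paper: non-negativity of the eigenvalues from positive semi-definiteness, the trace identity $\sum_i \lambda_i = \sum_i \sigma_{ii} \leqslant k/4$ using $\sigma_{ii} \in [0,1/4]$, and then bounding each $\lambda_i$ by the full sum. No gaps to report.
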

\begin{proof}
  Since $\Sigma$ is a real, symmetric, non-negative definite matrix, the eigenvalues
  $\lambda_i$ are non-negative real numbers \citep{salce}; this proves the lower bound
  in both inequalities.

  The upper bound in the first inequality holds because
  \begin{equation}
     \sum_{i=1}^k \lambda_i = \sum_{i=1}^k \sigma_{ii} \leqslant
     \max_{\left\{\sigma_{ii}\right\}} \sum_{i=1}^k \sigma_{ii} =
     \sum_{i=1}^k \max \sigma_{ii} = \frac{k}{4},
  \end{equation}
  and this in turn implies $\lambda_i \leqslant \sum_{i=1}^k \lambda_i \leqslant k/4$,
  which completes the proof.
\end{proof}

These bounds define a convex set in $\mathbb{R}^k$, defined by the family
\begin{equation}
  \mathcal{D} = \left\{ \Delta^{k-1}(c) : c \in \left[ 0, \frac{k}{4} \right]\right\}
\end{equation}
where $\Delta^{k-1}(c)$ is the non-standard $k-1$ simplex
\begin{equation}
  \Delta^{k-1}(c) = \left\{ (\lambda_1, \ldots, \lambda_k) \in \mathbb{R}^k :
  \sum_{i=1}^k \lambda_i = c, \lambda_i \geqslant 0\right\}.
\end{equation}

\subsection{Sequences of multivariate Bernoulli variables}

Consider now a sequence of independent and identically distributed multivariate
Bernoulli variables $\mathbf{B_1}, \mathbf{B_2}, \ldots, \mathbf{B_m} \sim 
Ber_k(\mathbf{p})$. The sum
\begin{equation}
  \mathbf{S}_m = \sum_{i=1}^m \mathbf{B}_i \sim Bi_k(m, \mathbf{p})
\end{equation}
is distributed as a \textit{multivariate Binomial random variable} \citep{krummenauer},
thus preserving one of the fundamental properties of the univariate Bernoulli
distribution. A similar result holds for the \textit{law of small numbers},
whose multivariate version states that a $k$-variate Binomial distribution
$Bi_k(m, \mathbf{p})$ converges to a \textit{multivariate Poisson distribution}
$P_k(\mathbf{\Lambda})$:
\begin{align}
  &\mathbf{S}_m \stackrel{d}{\to} P_k(\mathbf{\Lambda})&
  &\text{as}&
  &m\mathbf{p} \to \mathbf{\Lambda}.
\end{align}

Both these distributions' probability functions, while tractable, are not very
useful as a basis for closed-form inference procedures. An alternative is given
by the asymptotic \textit{multivariate Gaussian distribution} defined by the
\textit{multivariate central limit theorem} \citep{ash}:
\begin{equation}
\label{eqn:mclt}
  \frac{\mathbf{S}_m - m \E(\mathbf{B}_1)}{\sqrt{m}}
    \stackrel{d}{\to} N_k(\mathbf{0}, \Sigma).
\end{equation}
The limiting distribution is guaranteed to exist for all possible values of
$\mathbf{p}$, as the first two moments are bounded and therefore are always
finite.

\section{Inference on the network structure}

Let $\mathcal{U} = (\mathbf{V}, E)$ be the undirected graph underlying a DAG
$\mathcal{G} = (\mathbf{V}, A)$, defined as its unique biorientation
\citep{digraphs}. Each edge $e \in E$ of $\mathcal{U}$ corresponds to the directed
arcs in $A$ with the same incident nodes, and has only two possible states (it's
either present in or absent from the graph).

Then each possible edge $e_i$, $i = 1, \ldots, |\mathbf{V}|(|\mathbf{V}|-1)/2$ 
is naturally distributed as a Bernoulli random variable
\begin{equation}
  E_i = \left\{
    \begin{aligned}
     e_i &\in E&     &\text{with probability $p_i$}& \\
     e_i &\not\in E& &\text{with probability $1 - p_i$}&
    \end{aligned}
    \right.
\end{equation}
and every set $W \subseteq \mathbf{V} \times \mathbf{V}$ (including $E$) is
distributed as a multivariate Bernoulli random variable $\mathbf{W}$ and
identified by the parameter collection
$\mathbf{p}_W = \left\{ p_w : w \subseteq W, w \neq \varnothing \right\}$.
The elements of $\mathbf{p}_W$ can be estimated via parametric or nonparametric
bootstrap as in \citet{friedman}, because they are functions of the DAGs
$\mathcal{G}_b$, $b = 1, \ldots, m$ through the underlying undirected
graphs $\mathcal{U}_b = (V, E_b)$. The resulting empirical probabilities
\begin{equation}
  \hat p_w = \frac{1}{m} \sum_{b=1}^m \mathbb{I}_{\left\{w \subseteq E_b\right\}}(\mathcal{U}_b),
\end{equation}
in particular
\begin{align}
  &\hat p_i = \frac{1}{m} \sum_{b=1}^m \mathbb{I}_{\left\{e_i \in E_b\right\}}(\mathcal{U}_b)&
  &\text{and}&
  &\hat p_{ij} = \frac{1}{m} \sum_{b=1}^m \mathbb{I}_{\left\{e_i \in E_b, e_j \in E_b\right\}}(\mathcal{U}_b),
\end{align}
can be used to obtain several descriptive measures and test statistics for
the variability of the structure of a Bayesian network.

\subsection{Interpretation of bootstrapped networks}

Considering the undirected graphs $\mathcal{U}_1, \ldots, \mathcal{U}_m$
instead of the corresponding directed graphs $\mathcal{G}_1, \ldots,
\mathcal{G}_m$ greatly simplifies the interpretation of bootstrap's results. In
particular the variability of the graphical structure can be summarized in three
cases according to the entropy \citep{itheory} of the set of the bootstrapped
networks:
\begin{itemize}
  \item \textit{minimum entropy}: all the networks learned from the bootstrap
    samples have the same structure, that is $E_1 = E_2 = \ldots = E_m = E$.
    This is the best possible outcome of the simulation, because there is no
    variability in the estimated network. In this case the first two moments of
    the multivariate Bernoulli distribution are equal to
    \begin{align}
      &p_i = \left\{
        \begin{aligned}
         &1& &\text{if $e_i \in E$}     \\
         &0& &\text{otherwise}&
        \end{aligned}
        \right.&
      &\text{and}&
      &\Sigma = \mathbf{O}.
    \end{align}
  \item \textit{intermediate entropy}: several network structures are observed
    with different frequencies $m_b$, $\sum m_b = m$. The
    first two sample moments of the multivariate Bernoulli distribution are equal to
    \begin{align}
      &\hat p_i = \frac{1}{m} \sum_{b \,:\, e_i \in E_b} m_b&
      &\text{and}&
      &\hat p_{ij} = \frac{1}{m} \sum_{b \,:\, e_i \in E_b, e_j \in E_b} m_b.
    \end{align}
  \item \textit{maximum entropy}: all $2^{|\mathbf{V}|(|\mathbf{V}| - 1)/2}$ 
    possible network structures appear with the same frequency, that is
    \begin{align}
      &\hat\Prob(\mathcal{U}_i) = \frac{1}{2^{|\mathbf{V}|(|\mathbf{V}|-1)/2}}& &i = 1, \ldots, 2^{|\mathbf{V}|(|\mathbf{V}| - 1)/2}.
    \end{align}
    This is the worst possible outcome because edges vary independently of each
    other and each one is present in only half of the networks (proof provided in
    \ref{app:maxent}):
    \begin{align}
      \label{eqn:maxent}
      &p_i = \frac{1}{2}& &\text{and}& &\Sigma = \frac{1}{4} I_k.
    \end{align}
    This is also the only case in which all eigenvalues reach their maximum,
    that is $\lambda_1 = \lambda_2 = \ldots = \lambda_k = 1/4$.
\end{itemize}

\subsection{Descriptive statistics of network variability}
\label{sec:descriptive}

Several functions have been proposed in literature as univariate measures of
spread of a multivariate distribution, usually under the assumption of multivariate
normality (see for example \citet{muirhead} and \citet{bilodeau}). Three of
them in particular can be used as descriptive statistics for the multivariate
Bernoulli distribution:
\begin{itemize}
  \item the \textit{generalized variance}, $\VAR_G(\Sigma) = \det(\Sigma)$.
  \item the \textit{total variance}, $\VAR_T(\Sigma) = \tr(\Sigma)$, also called \textit{total variation} in \citet{mardia}.
  \item the squared \textit{Frobenius matrix norm},
    $\VAR_N(\Sigma) = ||| \Sigma - (k/4)I_k|||_F^2$.
\end{itemize}

Both generalized and total variance associate high values of the statistic
to unstable network structures, and are bounded due to the properties of the
covariance matrix $\Sigma$. For the total variance it's easy to show that
\begin{equation}
\label{eq:vart}
  0 \leqslant \VAR_T(\Sigma) = \sum_{i=1}^k \sigma_{ii} \leqslant \frac{1}{4}k.
\end{equation}
The generalized variance is similarly bounded due to Hadamard's theorem on the
determinant of a non-negative definite matrix \citep{seber}:
\begin{equation}
  0 \leqslant \VAR_G(\Sigma) \leqslant \prod_{i=1}^k \sigma_{ii} \leqslant \left(\frac{1}{4}\right)^k.
\end{equation}
They reach the respective maxima in the \textit{maximum entropy} case and
are equal to zero only in the \textit{minimum entropy} case. The generalized
variance is also strictly convex (the maximum is reached only for $\Sigma =
(1/4)I_k$), but it is equal to zero if $\Sigma$ is rank deficient.
For this reason it's convenient to reduce $\Sigma$ to a smaller, full rank matrix
(let's say $\Sigma^*$) and compute $\VAR_G(\Sigma^*)$ instead of $\VAR_G(\Sigma)$.

The squared difference in Frobenius norm between $\Sigma$ and $k$ times the
\textit{maximum entropy} covariance matrix associates high values of the
statistic to stable network structures. It can be rewritten in terms of the
eigenvalues $\lambda_1, \ldots, \lambda_k$ of $\Sigma$ as
\begin{equation}
  \VAR_N(\Sigma) = \sum_{i=1}^k \left( \lambda_i - \frac{k}{4}\right)^2.
\end{equation}
It has a unique maximum (in the \textit{minimum entropy} case), which can be
computed as the solution of the constrained minimization problem in
$\boldsymbol{\lambda} = [\lambda_1, \ldots, \lambda_k]^T$
\begin{align}
  &\min_{\mathcal{D}} f(\boldsymbol{\lambda})
    = -\sum_{i=1}^k \left( \lambda_i - \frac{k}{4}\right)^2&
  &\text{subject to}&
  &\lambda_i \geqslant 0, \sum_{i=1}^k \lambda_i \leqslant \frac{k}{4}
\end{align}
using Lagrange multipliers \citep{nocedal}. It also has a single minimum in
$\boldsymbol{\lambda}^* = [1/4, \ldots, 1/4]$, which is the projection of 
$[k/4, \ldots, k/4]$ onto the set $\mathcal{D}$ and coincides with the 
\textit{maximum entropy} case. The proof for these boundaries and the rationale
behind the use of $(k/4)I_k$ instead of $(1/4)I_k$ are reported in \ref{app:frob}.

The corresponding normalized statistics are:
\begin{align}
  \overline{\VAR}_T(\Sigma) &= \frac{\VAR_T(\Sigma)}{\max_{\Sigma} \VAR_T(\Sigma)} = \frac{4 \VAR_T(\Sigma)}{k} \\
  \overline{\VAR}_G(\Sigma) &= \frac{\VAR_G(\Sigma)}{\max_{\Sigma} \VAR_G(\Sigma)} = 4^k \VAR_G(\Sigma) \\
  \overline{\VAR}_N(\Sigma) &= \frac{\max_{\Sigma} \VAR_N(\Sigma) - \VAR_N(\Sigma)}{\max_{\Sigma} \VAR_N(\Sigma) - \min_{\Sigma} \VAR_N(\Sigma)}
    = \frac{k^3 - 16\VAR_N(\Sigma)}{k(2k - 1)}.
\end{align}
All of them vary in the $[0,1]$ interval and associate high values of the
statistic to networks whose structure display a high variability across the
bootstrap samples. Equivalently we can define their complements 
$\overline{\overline{\VAR}}_T(\Sigma)$, $\overline{\overline{\VAR}}_G(\Sigma)$
and $\overline{\overline{\VAR}}_N(\Sigma)$, which associate high values of 
the statistic to networks with little variability and can be used as 
measures of distance from the \textit{maximum entropy} case.

\subsection{Asymptotic inference}
\label{sec:aymptotic}

The limiting distribution of the descriptive statistics defined above can be
derived by replacing the covariance matrix $\Sigma$ with its unbiased estimator
$\hat\Sigma$ and by considering the multivariate Gaussian distribution from equation
\ref{eqn:mclt}. The hypothesis we are interested in is
\begin{align}
\label{eqn:null}
  &H_0: \Sigma = \frac{1}{4}I_k&
  &H_1: \Sigma \neq \frac{1}{4}I_k,
\end{align}
which relates the sample covariance matrix with the one from the \textit{maximum
entropy} case.

For the total variance we have that $t_T = 4m \tr(\hat\Sigma) \stackrel{.}{\sim} \chi^2_{mk}$
\citep{muirhead},
and since the maximum value of $\tr(\Sigma)$ is achieved in the \textit{maximum
entropy} case, the hypothesis in Equation \ref{eqn:null} assumes the form
\begin{align}
  &H_0: \tr(\Sigma) = \frac{k}{4}&
  &H_1: \tr(\Sigma) < \frac{k}{4}.
\end{align}
Then the observed significance value is $\hat\alpha_T = \Prob(t_T \leqslant t_T^{oss})$,
and can be improved with the finite sample correction
\begin{equation}
   \tilde\alpha_T = \Prob\left(t_T \leqslant t_T^{oss} \,|\, t_T \in [0, mk]\right) =
   \frac{\Prob(t_T \leqslant t_T^{oss})}{\Prob(t_T \leqslant mk)}
\end{equation}
which accounts for the bounds on $\VAR_T(\Sigma)$ from inequality \ref{eq:vart}.

For the generalized variance there are several possible asymptotic and approximate
distributions:
\begin{itemize}
  \item the Gaussian distribution defined in \citet{anderson}
    \begin{equation}
      t_{G_1} = \sqrt{m} \left( \frac{\det(\hat\Sigma)}{\det(\frac{1}{4}I_k)} -1 \right) \stackrel{.}{\sim} N(0, 2k).
    \end{equation}
  \item the Gamma distribution defined in \citet{steyn}
    \begin{equation}
      t_{G_2} = \frac{mk}{2} \sqrt[k]{\frac{\det(\hat\Sigma)}{\det(\frac{1}{4}I_k)}} \stackrel{.}{\sim} Ga\left(\frac{k(m+1-k)}{2}, 1\right).
    \end{equation}
  \item the saddlepoint approximation defined in \citet{saddlepoint}.
\end{itemize}

As before the hypothesis in Equation \ref{eqn:null} assumes the form
\begin{align}
  &H_0: \det(\Sigma) = \det\left(\frac{1}{4}I_k\right)&
  &H_1: \det(\Sigma) < \det\left(\frac{1}{4}I_k\right).
\end{align}
The observed significance values for the Gaussian and Gamma distributions are
$\hat\alpha_{G_1} = \Prob(t_{G_1} \leqslant t_{G_1}^{oss})$ and
$\hat\alpha_{G_2} = \Prob(t_{G_2} \leqslant t_{G_2}^{oss})$, and
the respective finite sample corrections for the bounds on $\det(\Sigma)$ are
\begin{align}
   \tilde\alpha_{G_1} &= \Prob\left(t_{G_1} \leqslant t_{G_1}^{oss} \,|\, t_{G_1} \in \left[-\sqrt{m}, 0\right]\right)
     = \frac{\Prob(t_{G_1} \leqslant t_{G_1}^{oss}) - \Prob(t_{G_1} \leqslant -\sqrt{m})}{\Prob(t_{G_1} \leqslant 0) -  \Prob(t_{G_1} \leqslant -\sqrt{m})}\\
   \tilde\alpha_{G_2} &= \Prob\left(t_{G_2} \leqslant t_{G_2}^{oss} \,|\, t_{G_2} \in \left[0, \frac{mk}{2}\right]\right)
     =  \frac{\Prob(t_{G_2} \leqslant t_{G_2}^{oss})}{\Prob(t_{G_2} \leqslant \frac{mk}{2})}.
\end{align}

The test statistic associated with the squared Frobenius norm is the test for
the equality of two covariance matrices defined in \citet{nagao},
\begin{equation}
  t_N = \frac{m}{2} \tr\left(\left[\hat\Sigma \left(\frac{1}{4}I_k\right)^{-1} - I_k\right]^2\right)
    = \frac{m}{2} \tr\left(\left[4\hat\Sigma  - I_k\right]^2\right) \stackrel{.}{\sim} \chi^2_{\frac{1}{2}k(k+1)},
\end{equation}
because
\begin{equation}
  \tr\left(\left[4\hat\Sigma - I_k\right]^2\right) 
    = 16 \sum_{i=1}^k \left(\lambda_i - \frac{1}{4}\right)^2 = 16 ||| \hat\Sigma - \frac{1}{4}I_k|||_F^2
\end{equation}
See \ref{app:frob} for an explanation of the use of $(1/4)I_k$ instead of $(k/4)I_k$.
The significance value for $t_N$ is $\hat\alpha_N = \Prob(t_N \geqslant t_N^{oss})$
as the hypothesis in Equation \ref{eqn:null} becomes
\begin{align}
  &H_0: ||| \Sigma - \frac{1}{4}I_k|||_F = 0&
  &H_1: ||| \Sigma - \frac{1}{4}I_k|||_F > 0.
\end{align}
Unlike the previous statistics, Nagao's test displays a good convergence 
speed, to the point that the finite sample correction for the bounds on the 
squared Frobenius matrix norm
\begin{equation}
   \tilde\alpha_N = \Prob\left(t_N \geqslant t_N^{oss} \,|\, t_{G_1} \in \left[0, t_N^{max}\right]\right)
     = \frac{\Prob(t_N \geqslant t_N^{oss}) - \Prob(t_N > t_N^{max})}{\Prob(t_N \leqslant t_N^{max})}
\end{equation}
is not appreciably better than the raw significance value (see Table \ref{tab:asym}
for a simple example).

\subsection{Monte Carlo inference and parametric bootstrap}

Another approach to compute the significance values associated with $\VAR_T(\Sigma)$, 
$\VAR_G(\Sigma)$ and $\VAR_N(\Sigma)$ is applying again parametric bootstrap.

The multivariate Bernoulli distribution $\mathbf{W}_0$ specified by the hypothesis
in \ref{eqn:null} has a diagonal covariance matrix, so its components $W_{0_i}$,
$i = 1, \ldots, k$ are uncorrelated. According to Theorem \ref{thm:univindep} they
are also independent, so the joint distribution of $\mathbf{W}_0$ is completely
specified by the marginal distributions $W_{0_i} \sim Ber(1/2)$. Therefore
it's possible (and indeed quite easy) to generate observations from the null
distribution and use them to estimate the significance value of the normalized
statistics $\overline{\overline{\VAR}}_T(\Sigma)$, $\overline{\overline{\VAR}}_G(\Sigma)$
and $\overline{\overline{\VAR}}_N(\Sigma)$ defined in section \ref{sec:descriptive}:

\begin{enumerate}
  \item compute the value of test statistic $T$ on the original covariance matrix $\Sigma$.
  \item For $r = 1, 2, \ldots, R$.
  \begin{enumerate}
    \item generate $m$ sets of $k$ random samples from a $Ber(1/2)$ distribution.
    \item compute their covariance matrix $\Sigma^*_r$.
    \item compute $T^*_r$ from $\Sigma^*_r$
  \end{enumerate}
  \item compute the Monte Carlo significance value as 
    $\hat\alpha_R = (1/R) \sum_{r = 1}^R \mathbb{I}_{\{x \geqslant T\}}(T^*_r)$.
\end{enumerate}

This approach has two important advantages over the parametric tests defined in section
\ref{sec:aymptotic}:
\begin{itemize}
  \item the test statistic is evaluated against its true null distribution instead
    of its asymptotic approximation, thus removing any distortion caused by lack
    of convergence (which can be quite slow and problematic in high dimensions).
  \item each simulation $r$ has a lower computational cost than the equivalent
    application of the structure learning algorithm to a bootstrap sample $b$.
    Therefore the Monte Carlo test can achieve a good precision with a smaller
    number of bootstrapped networks, allowing its application to larger problems.
\end{itemize}

\section{A simple example}

\begin{figure}[ht]
  \begin{center}
    \includegraphics[width=10cm]{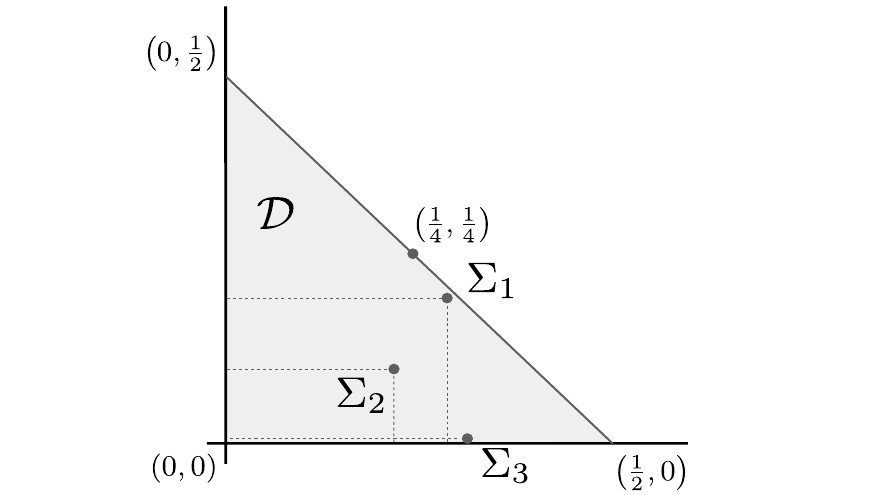}
  \end{center}
  \caption{The covariance matrices $\Sigma_1$, $\Sigma_2$ and $\Sigma_3$ represented
    as functions of their eigenvalues in $\mathcal{D}$ (grey). The points $(0,0)$
    and $(1/4, 1/4)$ correspond to the \textit{minimum entropy} and
    \textit{maximum entropy} cases.}
\label{fig:base}
\end{figure}

Consider the multivariate Bernoulli distributions $\mathbf{W}_1$, $\mathbf{W}_2$ and
$\mathbf{W}_3$ with second order moments
\begin{align}
  &\Sigma_1 = \frac{1}{25} \begin{bmatrix} 6 & 1 \\ 1 & 6 \end{bmatrix},&
  &\Sigma_2 = \frac{1}{625} \begin{bmatrix} 66 & -21 \\ -21 & 126 \end{bmatrix},&
  &\text{and}&
  &\Sigma_3 = \frac{1}{625} \begin{bmatrix} 66 & 91 \\ 91 & 126 \end{bmatrix}
\end{align}
associated with two (increasingly correlated) arcs from networks. The 
eigenvalues of $\Sigma_1$, $\Sigma_2$ and $\Sigma_3$ are
\begin{align}
  &\boldsymbol{\lambda}_1 = \begin{bmatrix} 0.28 \\ 0.20 \end{bmatrix},&
  &\boldsymbol{\lambda}_2 = \begin{bmatrix} 0.2121 \\ 0.095 \end{bmatrix}&
  &\text{and}&
  &\boldsymbol{\lambda}_3 = \begin{bmatrix} 0.3069 \\ 0.0003 \end{bmatrix}.
\end{align}
The values of the generalized variance, total variance and squared Frobenius
matrix norm (both normalized and in the original scale) for the three covariance
matrices are reported int Table \ref{tab:num}. 
\begin{table}[!b]
  \begin{center}
  \begin{tabular}{|l|lll|lll|}
  \hline
             &                  &                  &                  &                             &                             & \\[-12pt]
             & $\VAR_T(\Sigma)$ & $\VAR_G(\Sigma)$ & $\VAR_N(\Sigma)$ & $\overline{\VAR}_T(\Sigma)$ & $\overline{\VAR}_G(\Sigma)$ & $\overline{\VAR}_N(\Sigma)$ \\
  \hline
  $\Sigma_1$ & $0.48$           & $0.056$          & $0.1384$         & $ 0.96$                     & $0.896$                     & $0.9642$  \\
  $\Sigma_2$ & $0.3072$         & $0.02016$        & $0.2468$         & $0.6144$                    & $0.32256$                   & $0.6752$     \\
  $\Sigma_3$ & $0.3072$     & $8.96\times 10^{-5}$ & $0.2869$         & $0.6144$                    & $0.00143$                   & $0.5682$  \\
  \hline
  \end{tabular}
  \end{center}
  \caption{Original and normalized values of $\VAR_T$, $\VAR_G$ and 
    $\VAR_N$ for $\Sigma_1$, $\Sigma_2$ and $\Sigma_3$.}
  \label{tab:num}
\end{table}

The corresponding asymptotic and Monte Carlo significance values are reported in
Table \ref{tab:asym} and \ref{tab:boot} respectively. Each one has been computed
for various hypothetical sample sizes ($m = 10, 20, 50, 100, 200$). Parametric
bootstrap has been performed on $R = 10^6$ covariance matrices generated from the
null distribution for each configuration of test statistic and sample size.

\begin{table}[hp!]
  \begin{center}
    \begin{tabular}{|l|lllll|}
    \hline
    \multicolumn{6}{|c|}{$t_T(\Sigma)$} \\
    \hline
               & $10$        & $20$        & $50$        & $100$          & $200$      \\
    \hline
    \multirow{2}{*}{$\Sigma_1$}
               & $0.491137$ & $0.457610$ & $0.405404$ & $0.354943$    & $0.291243$ \\
               & $\mathbf{0.906041}$ & $\mathbf{0.863836}$ & $\mathbf{0.781414}$ & $\mathbf{0.691495}$ & $\mathbf{0.571734}$  \\
    \multirow{2}{*}{$\Sigma_2$}
               & $0.094193$ & $0.026330$ & $0.000852$ & $0.000003$    & $0.000000$ \\
               & $\mathbf{0.173766}$ & $\mathbf{0.049704}$ & $\mathbf{0.001644}$ & $\mathbf{0.000007}$ & $\mathbf{0.000000}$ \\
    \multirow{2}{*}{$\Sigma_3$}
               & $0.094193$ & $0.026330$ & $0.000852$ & $0.000003$    & $0.000000$ \\
               & $\mathbf{0.173766}$ & $\mathbf{0.049704}$ & $\mathbf{0.001644}$ & $\mathbf{0.000007}$ & $\mathbf{0.000000}$ \\
    \hline
    \multicolumn{6}{|c|}{$t_{G_2}(\Sigma)$} \\
    \hline
    \multirow{2}{*}{$\Sigma_1$}
               & $0.603944$ & $0.524258$ & $0.423183$     & $0.341131$     & $0.250054$ \\
               & $\mathbf{0.905218}$ & $\mathbf{0.847522}$ & $\mathbf{0.735799}$ & $\mathbf{0.616696}$ & $\mathbf{0.465129}$ \\
    \multirow{2}{*}{$\Sigma_2$}
               & $0.121488$ & $0.023514$ & $0.000278$    & $0.000000$     & $0.000000$ \\
               & $\mathbf{0.182091}$ & $\mathbf{0.0380138}$ & $\mathbf{0.000484}$ & $\mathbf{0.000000}$ & $\mathbf{0.000000}$ \\
    \multirow{2}{*}{$\Sigma_3$}
               & $0.000000$ & $0.000000$ & $0.000000$ & $0.000000$ & $0.000000$ \\
               & $\mathbf{0.000000}$ & $\mathbf{0.000000} $ & $ \mathbf{0.000000} $ & $\mathbf{0.000000}$ & $\mathbf{0.000000} $\\
    \hline
    \multicolumn{6}{|c|}{$t_N(\Sigma)$} \\
    \hline
    \multirow{2}{*}{$\Sigma_1$}
               & $0.965205$ & $0.909123$ & $0.714937$     & $0.436839$    & $0.142271$ \\
               & $\mathbf{0.964547}$ & $\mathbf{0.909108}$ & $\mathbf{0.714937}$ &  $\mathbf{0.436839}$ & $\mathbf{0.142271}$ \\
    \multirow{2}{*}{$\Sigma_2$}
               & $0.564938$ & $0.253762$ & $0.017090$     & $0.000142$    & $0.000000$ \\
               & $\mathbf{0.556708}$ & $\mathbf{0.253636}$ & $\mathbf{0.017090}$ & $\mathbf{0.000142}$ & $\mathbf{0.000000}$ \\
    \multirow{2}{*}{$\Sigma_3$}
               & $0.154551$ & $0.014796$ & $0.000008$     & $0.000000$ & $0.000000$ \\
               & $\mathbf{0.138557}$ & $\mathbf{0.014628}$ & $\mathbf{0.000008}$ & $\mathbf{0.000000}$ & $\mathbf{0.000000}$ \\
    \hline
    \end{tabular}
  \caption{Asymptotic significance values of $t_T$, $t_{G_2}$ and $t_N$
    for $\Sigma_1$, $\Sigma_2$ and $\Sigma_3$; the ones computed with the
    finite sample corrections are reported in bold.}
  \label{tab:asym}
  \begin{tabular}{|l|lllll|}
  \hline
  \multicolumn{6}{|c|}{} \\[-12pt]
  \multicolumn{6}{|c|}{$\overline{\overline{\VAR}}_T(\Sigma)$} \\
  \hline
             & $10$       & $20$       & $50$       & $100$      & $200$      \\
  \hline
  $\Sigma_1$ & $0.569655$ & $0.457109$ & $0.129242$ & $0.017416$ & $0.000334$ \\
  $\Sigma_2$ & $0.016834$ & $0.000205$ & $0$        & $0$        & $0$        \\
  $\Sigma_3$ & $0.016834$ & $0.000205$ & $0$        & $0$        & $0$        \\
  \hline
  \multicolumn{6}{|c|}{} \\[-12pt]
  \multicolumn{6}{|c|}{$\overline{\overline{\VAR}}_G(\Sigma)$} \\
  \hline
  $\Sigma_1$ & $0.784102$ & $0.512839$ & $0.14788$  & $0.013678$ & $0.000094$ \\
  $\Sigma_2$ & $0.063548$ & $0.000761$ & $0$        & $0$        & $0$        \\
  $\Sigma_3$ & $0.005909$ & $0.000008$ & $0$        & $0$        & $0$        \\
  \hline
  \multicolumn{6}{|c|}{} \\[-12pt]
  \multicolumn{6}{|c|}{$\overline{\overline{\VAR}}_N(\Sigma)$} \\
  \hline
  $\Sigma_1$ & $0.743797$ & $0.568819$ & $0.239397$ & $0.096544$ & $0.019633$ \\
  $\Sigma_2$ & $0.196996$ & $0.037772$ & $0.001018$ & $0.000005$ & $0$        \\
  $\Sigma_3$ & $0.018292$ & $0.000355$ & $0$        & $0$        & $0$        \\
  \hline
  \end{tabular}
  \caption{Bootstrap significance values of $\overline{\overline{\VAR}}_T$, 
    $\overline{\overline{\VAR}}_G$ and $\overline{\overline{\VAR}}_N$ from 
    parametric bootstrap for $\Sigma_1$, $\Sigma_2$ and $\Sigma_3$.}
  \label{tab:boot}
\end{center}
\end{table}

\pagebreak

\section{Comparing independence tests and structure learning algorithms}

We will now illustrate how these tests can be used to compare different structure
learning strategies, i.e. different combinations of structure learning algorithms,
conditional independence tests and network scores. The impact of different choices
for each component on the variability of the model can easily be assessed while
keeping the other ones fixed. 

\begin{figure}[!b]
  \begin{center}
    \includegraphics[width=\textwidth]{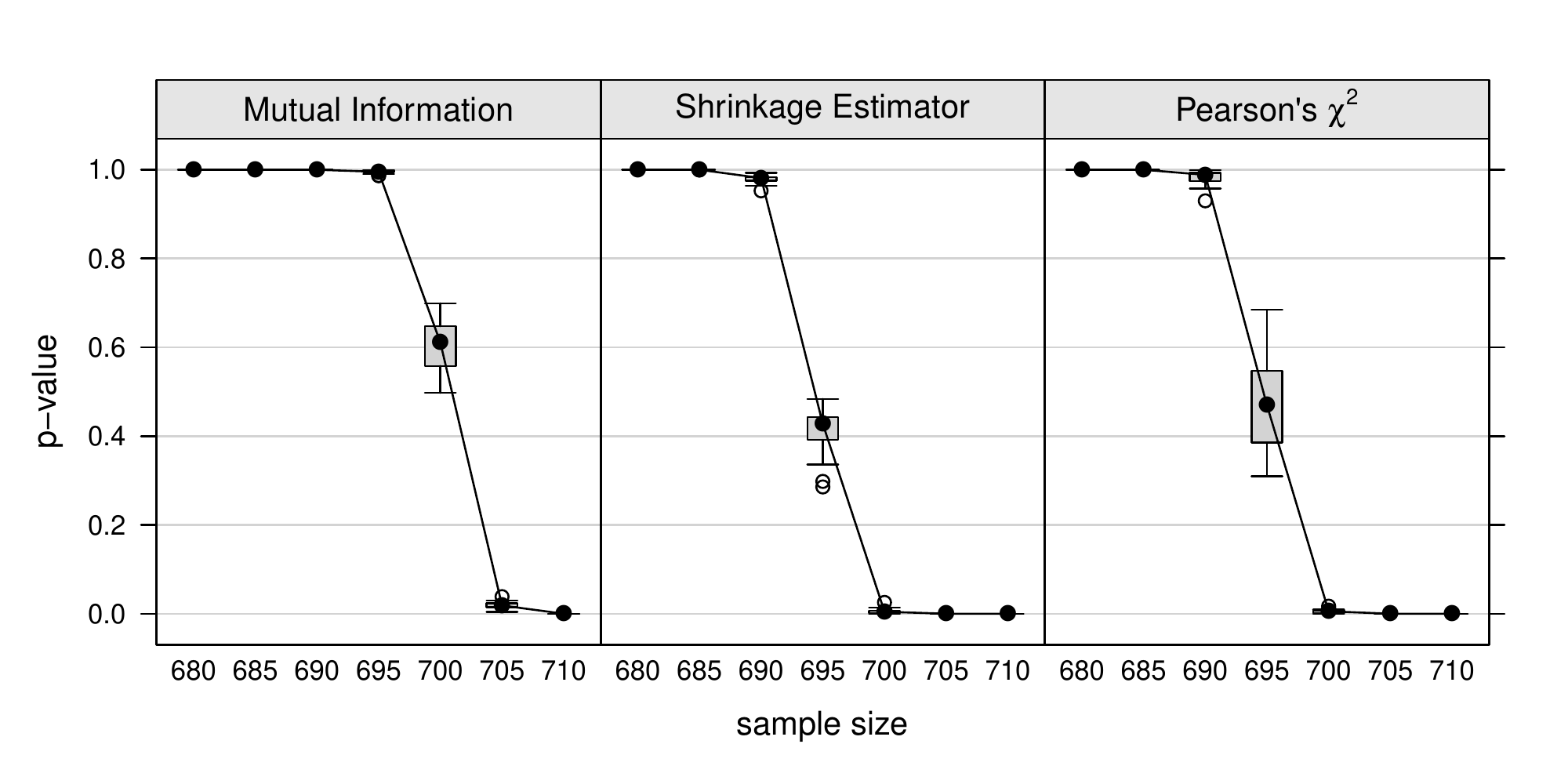}
    \caption{Significance values for three different conditional independence tests
    (asymptotic and shrinkage estimators of mutual information and Pearson's $\chi^2$)
    used with the same structure learning algorithm (Grow-Shrink).}
    \label{fig:smalltest}
  \end{center}
\end{figure}

First we will compare the performance of the Grow-Shrink algorithm for three 
different conditional independence tests. The learning algorithm has been applied
to samples of size $680$, $685$, $690$, $695$, $700$, $705$ and $710$ (20 for 
each size) generated from the ALARM reference network \citep{alarm}, which is 
composed by 37 discrete nodes and 46 arcs for a total of 509 parameters. Both the
data and the software implementation of the algorithm are included in the bnlearn
package \citep{bnlearn} for R \citep{r}. The following tests have been considered: 
\begin{itemize}
  \item the asymptotic $\chi^2$ test based on mutual information \citep{itheory}, 
    which is in fact a log-likelihood ratio test and is also called the $G^2$ 
    test \citep{agresti}.
  \item the shrinkage estimator for the mutual information, which is a James-Stein
    regularized estimator developed by \citet{shrinkage}.
  \item Pearson's $\chi^2$ asymptotic test for independence \citep{agresti}. 
\end{itemize}
The same threshold $\alpha = 0.05$ for type I error has been used in three cases,
and network variability has been assessed with the Monte Carlo test for the
squared Frobenius norm.

Results are shown in Figure \ref{fig:smalltest}. All the tests considered in the 
analysis start producing relatively stable network structures -- i.e. the null
hypothesis corresponding to the maximum entropy case is rejected --  at sample 
sizes $695$ and $700$. Pearson's $\chi^2$ test performs slightly better than
mutual information, as documented in \citet{agresti} when dealing with sparse
contingency tables. This is also true for the shrinkage estimator. However, 
the difference among the three sets of significance values is very small.

On the other hand we will now compare three different learning algorithms:
\begin{itemize}
  \item TABU search (which is a score-based algorithm), combined with a
    Bayesian Information criterion (BIC) score.
  \item Grow-Shrink (which is a constraint-based algorithm) combined with the 
    asymptotic $\chi^2$ test based on mutual information described above and
    $\alpha = 0.05$.
  \item Max-Min Hill Climbing (which is hybrid algorithm), combined with
    a BIC score and the asymptotic mutual information test.
\end{itemize}
As can be seen in Figure \ref{fig:smallalgo} in this case differences are
more pronounced. The Max-Min Hill Climbing algorithm, which is one of the top
performers up to date for large networks, displays less variability than 
TABU search and Grow-Shrink at the same sample size. In particular the 
difference between Max-Min Hill Climbing and Grow-Shrink confirms the analysis
made in \citet{mmhc} and the well-documented \citep{spirtes} instability 
displayed by constraint-based algorithms at small sample sizes.

\begin{figure}[!h]
  \begin{center}
    \includegraphics[width=\textwidth]{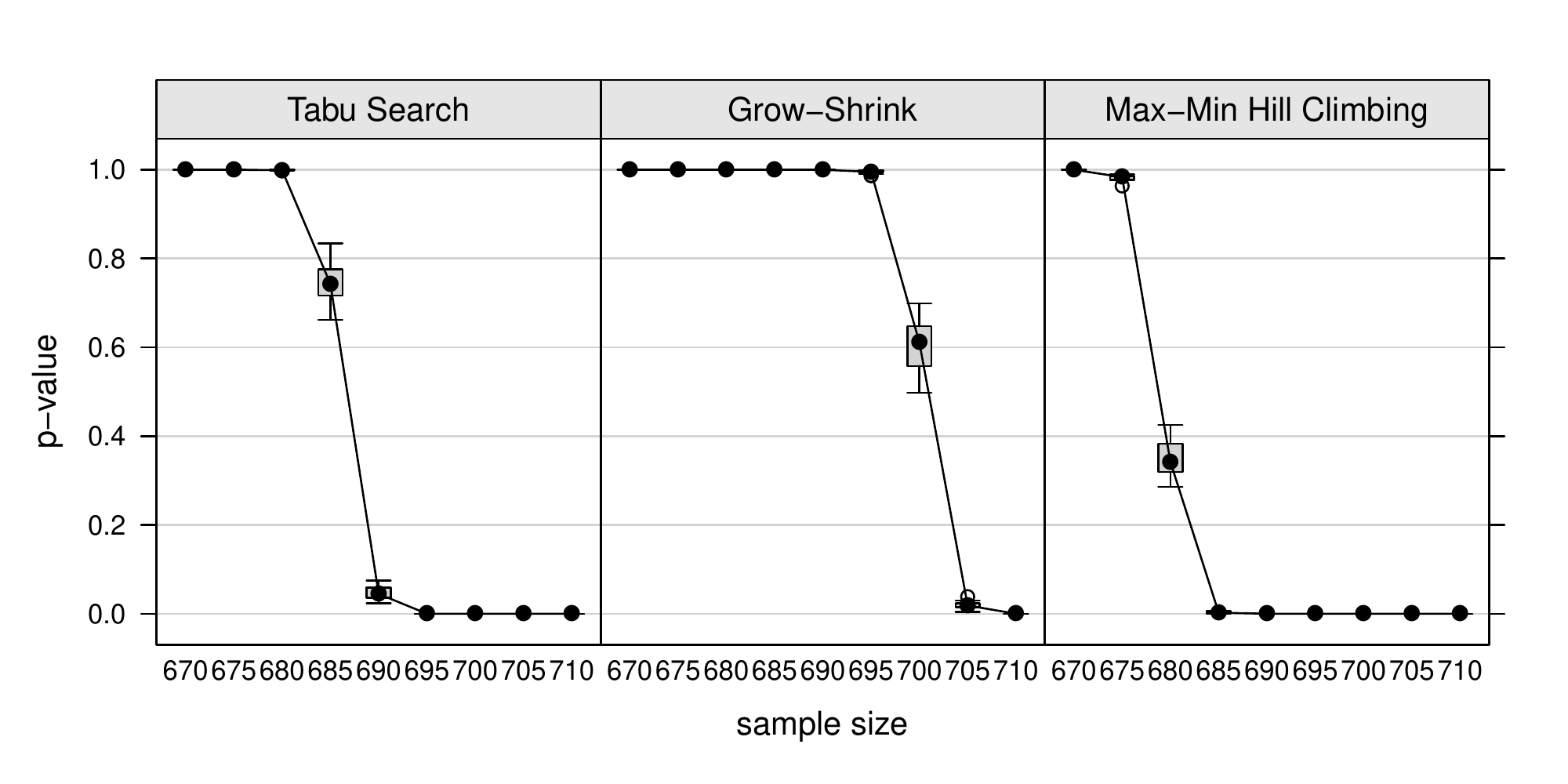}
    \caption{Significance values for three different structure learning algorithms
    (Grow-Shrink, TABU search and Max-Min Hill Climbing) using the same conditional
    independence tests and network scores (asymptotic mutual information test and
    the Bayesian Information Criterion (BIC), respectively).}
    \label{fig:smallalgo}
  \end{center}
\end{figure}

\section{Conclusions}

In this paper we derived the properties of several measures of variability for
the structure of a Bayesian network through its underlying undirected graph,
which is assumed to have a multivariate Bernoulli distribution. Descriptive
statistics, asymptotic and Monte Carlo tests were developed along with their
fundamental properties. They can be used to compare the performance of different
learning algorithms and to measure the strength of arbitrary subsets of arcs.

\section*{Acknowledgements}

Many thanks to Prof. Adriana Brogini, my Supervisor at the Ph.D. School in 
Statistical Sciences (University of Padova), for proofreading this article and
giving many useful comments and suggestions. I would also like to thank Giovanni
Andreatta and Luigi Salce (Full Professors at the Department of Pure and Applied
Mathematics, University of Padova) for their help in the development of the
constrained optimization and matrix norm applications respectively.

\section*{Appendix}

\appendix

\section{Bounds on the squared Frobenius matrix norm}
\label{app:frob}

The squared Frobenius matrix norm of the difference between the covariance
matrix $\Sigma$ and the \textit{maximum entropy} matrix $(1/4)I_k$ is
\begin{equation}
   |||\Sigma - \frac{1}{4}I_k|||_F^2
    = \sum_{i=1}^k \left( \lambda_i - \frac{1}{4}\right)^2.
\end{equation}
Its unique global minimum is zero for $\Sigma = (1/4)I_k$ but it has a varying
number of global maxima depending on the dimension $k$ of $\Sigma$. They are
the solutions of the constrained minimization problem
\begin{align}
  &\min_{\mathcal{D}} f(\boldsymbol{\lambda})
    = -\sum_{i=1}^k \left( \lambda_i - \frac{k}{4}\right)^2&
  &\text{subject to}&
  &\lambda_i \geqslant 0, \sum_{i=1}^k \lambda_i \leqslant \frac{k}{4}.
\end{align}
This configuration of stationary points is not a problem for asymptotic and Monte
Carlo tests, but prevents any direct interpretation of the values of descriptive
statistics.
\begin{figure}[t]
  \begin{center}
    \includegraphics{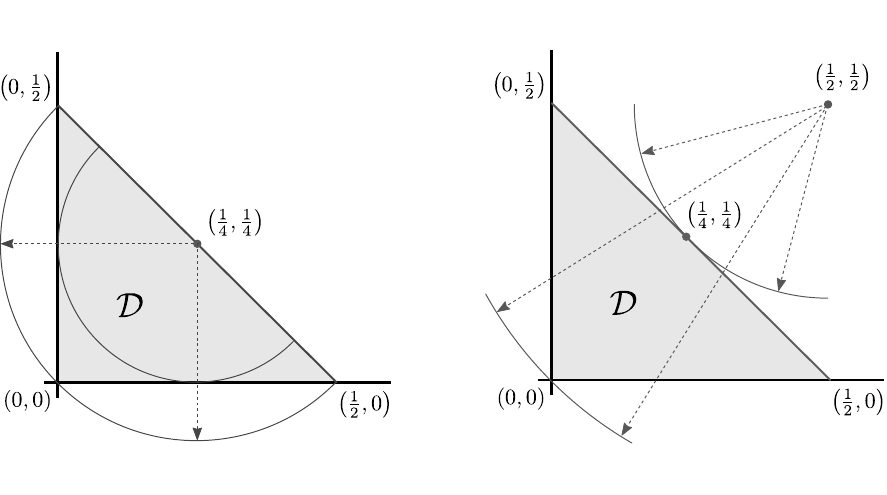}
  \end{center}
  \caption{Squared Frobenius matrix norms from $(1/4)I_K$ (on the left) and
    $(k/4)I_k$ (on the right) in $\mathcal{D}$ for $k = 2$. The green area
    is the set $\mathcal{D}$ of the possible eigenvalues of $\Sigma$ and the red
    lines are level curves.}
\label{fig:frobenius}
\end{figure}

On the other hand, the difference in squared Frobenius norm
\begin{equation}
  \VAR_N(\Sigma) = |||\Sigma - \frac{k}{4}I_k|||_F^2
    = \sum_{i=1}^k \left( \lambda_i - \frac{k}{4}\right)^2
\end{equation}
has both a unique global minimum (because it's a strictly convex function)
\begin{equation}
  \min_{\mathcal{D}} \VAR_N(\Sigma) = \VAR_N\left(\frac{1}{4}I_k\right)
    = \sum_{i=1}^k \left( \frac{1}{4} - \frac{k}{4}\right)^2 = \frac{k(k-1)^2}{16}
\end{equation}
and a unique global maximum
\begin{equation}
  \max_{\mathcal{D}} \VAR_N(\Sigma) = \VAR_N(\mathbf{O})
     = \sum_{i=1}^k \left( \frac{k}{4}\right)^2 = \frac{k^3}{16}
\end{equation}
which correspond to the \textit{minimum entropy} ($\boldsymbol{\lambda} = [0, \ldots, 0]$)
and the \textit{maximum entropy} ($\boldsymbol{\lambda} = [1/4, \ldots, 1/4]$)
covariance matrices respectively (see figure \ref{fig:frobenius}). However since
$(k/4)I_k$ is not a valid covariance matrix for a multivariate Bernoulli
distribution, $\VAR_N(\Sigma)$ cannot be used to derive any probabilistic result.

\section{Multivariate Bernoulli and the maximum entropy case}
\label{app:maxent}

The values of $p_i$ and $\Sigma$ in the \textit{maximum entropy} case are a direct
consequence from the following theorem.
\begin{thm}
  Let $\mathcal{U}_1, \ldots, \mathcal{U}_n$, $n = 2^m$, $m = |\mathbf{V}| (|\mathbf{V}| - 1)/2$
  be all possible undirected graphs with vertex set $\mathbf{V}$ and let
  $\Prob(\mathcal{U}_k) = 1/n$, $k = 1, \ldots, n$. Let $e_i$ and $e_j$,
  $i \neq j$ be two edges. Then $\Prob(e_i) = 1/2$ and $\Prob(e_i, e_j) = 1/4$.
\end{thm}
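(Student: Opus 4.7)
The plan is a straightforward counting argument based on the bijection between undirected graphs on $\mathbf{V}$ and binary strings of length $m = |\mathbf{V}|(|\mathbf{V}|-1)/2$, where each coordinate encodes the presence or absence of one of the $m$ possible edges. This bijection is what makes the total number of graphs $n = 2^m$ in the first place, and the same combinatorial structure will give the two marginal probabilities almost immediately.

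First I would count the graphs in which a fixed edge $e_i$ is present. Once the state of $e_i$ is pinned to ``present'', the remaining $m-1$ edges can independently be either present or absent, yielding exactly $2^{m-1}$ distinct undirected graphs that contain $e_i$. Since each $\mathcal{U}_k$ has probability $1/n = 1/2^m$, summing gives
\begin{equation}
\Prob(e_i) = \frac{2^{m-1}}{2^m} = \frac{1}{2}.
\end{equation}

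Next I would apply the same idea to a pair of distinct edges $e_i, e_j$, $i \neq j$. Fixing both of them to ``present'' leaves $m-2$ free coordinates, so there are $2^{m-2}$ graphs in which both edges appear, giving
\begin{equation}
\Prob(e_i, e_j) = \frac{2^{m-2}}{2^m} = \frac{1}{4}.
\end{equation}

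There is no real obstacle here; the only thing worth being careful about is making explicit that under the uniform distribution on $\{\mathcal{U}_1, \ldots, \mathcal{U}_n\}$ the edge indicators $E_1, \ldots, E_m$ are in fact independent $\mathrm{Ber}(1/2)$ variables, because the joint distribution factorises as a product of uniform distributions on $\{0,1\}$. From there, $p_i = 1/2$ and $p_{ij} = p_i p_j = 1/4$ follow, and combining with the expressions \eqref{eq:var} and \eqref{eq:cov} immediately recovers $\Sigma = (1/4) I_k$ as claimed in equation \eqref{eqn:maxent}.
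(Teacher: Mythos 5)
Your argument is correct and is essentially the paper's own proof: both count the $2^{m-1}$ graphs containing a fixed edge and the $2^{m-2}$ graphs containing a fixed pair of edges, and divide by the total $2^m$ under the uniform distribution. Your closing remark that the edge indicators are independent $Ber(1/2)$ variables matches the paper's follow-up observation that $\sigma_{ij}=0$ together with Theorem \ref{thm:univindep} yields independence and $\Sigma = \frac{1}{4}I_k$.
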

\begin{proof}
  The number of possible configurations of an undirected graph is given by
  the Cartesian product of the possible states of its $m$ edges, resulting in
  \begin{equation}
    |\{0,1\} \times \ldots \times \{0,1\}| = \left|\{0,1\}^m\right| = 2^m
  \end{equation}
  possible undirected graphs. Then edge $e_i$ is present in $2^{m-1}$ graphs
  and $e_i$ and $e_j$  are simultaneously present in $2^{m-2}$ graphs. Therefore
  \begin{align}
    &\Prob(e_i) = \frac{2^{m-1} \Prob(\mathcal{U}_k) }{2^m \Prob(\mathcal{U}_k) } = \frac{1}{2}&
    &\text{and}&
    &\Prob(e_i, e_j) = \frac{2^{m-2} \Prob(\mathcal{U}_k) }{2^m \Prob(\mathcal{U}_k) } = \frac{1}{4}.
  \end{align}
\end{proof}

The fact that $\sigma_{ij} = 0$ for every $i \neq j$ also proves
that the edges are independent according to Theorem \ref{thm:univindep}.

\section*{References}

\end{document}